\newtheorem{Theorem}{Theorem}[section]
\newtheorem{Lemma}[Theorem]{Lemma}
\newtheorem{Definition}[Theorem]{Definition}
\newtheorem{Proposition}[Theorem]{Proposition}
\newtheorem{Corollary}[Theorem]{Corollary}
\def\nocolor#1{}
\begin{document}

\title{Uncertainty Principles Associated with the Offset Linear Canonical Transform\thanks{This work was partially supported by the
National Natural Science Foundation of China (11371200, 11525104 and 11531013).}}

\author{Haiye Huo$^a$, Wenchang Sun$^b$, Li Xiao$^c$\\
$^a$ Department of Mathematics, School of Science, Nanchang University,\\ Nanchang~330031, Jiangxi, China \\
\mbox{} \\
$^b$ School of Mathematical Sciences and LPMC, Nankai University,\\ Tianjin~300071, China \\
\mbox{} \\
$^c$ Department of Biomedical Engineering, Tulane University,\\ New Orleans, LA~70118, USA\\
\mbox{} \\
\normalsize{Emails: hyhuo@ncu.edu.cn; sunwch@nankai.edu.cn; lxiao1@tulane.edu}}

\date{}
\maketitle

\textit{Abstract}.\,\,
As a time-shifted and frequency-modulated version of the linear canonical transform (LCT), the offset linear canonical transform (OLCT) provides a more general framework of most existing linear integral transforms in signal processing and optics. To study simultaneous localization of a signal and its OLCT, the classical Heisenberg's uncertainty principle has been recently generalized for the OLCT. In this paper, we complement it by presenting another two uncertainty principles, i.e., Donoho-Stark's uncertainty principle and Amrein-Berthier-Benedicks's uncertainty principle, for the OLCT. Moreover, we generalize the short-time LCT to the short-time OLCT. We likewise present Lieb's uncertainty principle for the short-time OLCT and give a lower bound for its essential support.

\textit{Keywords.}
Offset linear canonical transform; Short-time offset linear canonical transform; Time-frequency analysis; Uncertainty principle

%-------------------------------introduction-----------------
\section{Introduction}\label{sec:O1}
The offset linear canonical transform (OLCT) \cite{AS1994,Stern2007,XQH2013,ZWZ2016} is known as a six parameter $(a,b,c,d,\tau,\eta)$ class of linear integral transform, which is a time-shifted and frequency-modulated version of the linear canonical transform (LCT) with four parameters $(a,b,c,d)$ \cite{HS2015,SLS2012a,Stern2006A,TLW2008,XS2013}. The two extra parameters, i.e., time shifting $\tau$ and frequency modulation $\eta$, make the OLCT more general and flexible, and thereby the OLCT can apply to most electrical and optical signal systems. It basically says that the Fourier transform (FT), the fractional Fourier transform (FrFT), the Fresnel transform (FnT), the LCT, and many other widely used linear integral transforms in signal processing and optics are all special cases of the OLCT \cite{Almeida1994,LTW2007,OZK2001,Stern2006B,WL2015}. Therefore, it is interesting to study the OLCT in a unified viewpoint of the above mentioned transforms. Over the past few decades, there has been a vast amount of research on extending time-frequency analysis results that pertain to the FT or the LCT to the OLCT. For example, sampling theorems, convolution and correlation theorems, eigenfunctions, energy concentration problems, generalized prolate spheroidal wave functions, and spectral analysis of the OLCT have been derived in \cite{Stern2007,XQ2014,PD2003,Zayed2015,KMZ2013,XCH2015,XCHJL2015,Zhang2016}.

Recently, the classical Heisenberg's uncertainty principle has been generalized for the OLCT in \cite{Stern2007}.  It is stated in \cite{Stern2007} that a nonzero function and its OLCT cannot both be sharply concentrated and the joint concentration is dependent on the OLCT parameter $b$. Consider that in terms of the meaning of ``concentration'', many different forms of uncertainty principles are possible, like Heisenberg's uncertainty principle, Hardy's uncertainty principle, Donoho-Stark's uncertainty principle, and Amrein-Berthier-Benedicks's uncertainty principle \cite{FGS1997,Grochenig2001,HJ2012,SLZ2014,CSS2000,MOP2004,TY2010,AB1977,Benedicks1985}. In this paper, we derive another two common uncertainty principles for the OLCT, i.e., Donoho-Stark's uncertainty principle and Amrein-Berthier-Benedicks's uncertainty principle, which are analogous to the ones for the FT in \cite{Grochenig2001}. Moreover, we introduce the short-time OLCT as a generalization of the short-time LCT \cite{KXZ2012,KX2012}. Also, we derive Lieb's uncertainty principle for the short-time OLCT and give a lower bound for its essential support in this paper.

The rest of the paper is organized as follows. In section \ref{sec:OADD}, we briefly review the OLCT and its basic properties. In section \ref{sec:O2}, Donoho-Stark's uncertainty principle and Amrein-Berthier-Benedicks's uncertainty principle are extended for the OLCT. In section \ref{sec:O3}, we study the short-time OLCT and its Lieb's uncertainty principle. In section \ref{sec:O4}, we conclude the paper.

\section{Offset Linear Canonical Transform}\label{sec:OADD}

In this section, we briefly review the OLCT and some of its properties.

\begin{Definition}\label{def:OLCT}
The OLCT of a function $f(t)\in L^2(\mathbb{R})$ with parameter
$A=\left[
      \begin{array}{cc|c}
        a & b & \tau \\
        c & d & \eta \\
      \end{array}
      \right]$
is defined by \cite{XCH2015,XCHJL2015,BZ2017,PD2007}
\begin{equation}\label{eq:OLCT}
\mathcal{O}_{A}f(u)=\mathcal{O}_{A}[f(t)](u)=
    \begin{cases}
    \int_{-\infty}^{+\infty}f(t)K_{A}(t,u){\rm{d}}t, & b\ne0,\\
     \sqrt{d}e^{j\frac{cd}{2}(u-\tau)^2+ju\eta}f(d(u-\tau)), & b=0,\\
     \end{cases}
\end{equation}
where
\begin{equation}\label{eq:OLCT:1}
K_{A}(t,u)=\sqrt{\frac{1}{j2\pi b}}e^{j\frac{a}{2b}t^2-j\frac{1}{b}t(u-\tau)-j\frac{1}{b}u(d\tau-b\eta)+j\frac{d}{2b}(u^2+\tau^2)},
\end{equation}
$a,\;b,\;c,\;d,\;\tau,\;\eta\in\mathbb{R}$, and $ad-bc=1$.
\end{Definition}

It is readily verified from Definition~\ref{def:OLCT} that many well-known linear integral transforms are special cases of the OLCT in (\ref{eq:OLCT}). For example, when $A=\left[
      \begin{array}{cc|c}
        0 & 1 & 0 \\
        -1 & 0 & 0 \\
      \end{array}
      \right]$,
the OLCT becomes the FT; when
$A=\left[
      \begin{array}{cc|c}
        \cos\alpha & \sin\alpha & 0 \\
        -\sin\alpha & \cos\alpha & 0 \\
      \end{array}
      \right],$
the OLCT becomes the FrFT; when
$A=\left[
      \begin{array}{cc|c}
        a & b & 0 \\
        c & d & 0 \\
      \end{array}
      \right],$
the OLCT becomes the LCT.

Without loss of generality, we only consider the case of $b>0$ throughout the paper. Many properties of the OLCT have been easily verified by using the definition (\ref{eq:OLCT}) in \cite{XQH2013,PD2003,BZ2017}. Here, we present some important properties for later use.

\begin{enumerate}
  \item[$1)$] Additivity Property: Let
  \[
  A_1=\left[
      \begin{array}{cc|c}
        a_1 & b_1 & \tau_1 \\
        c_1 & d_1 & \eta_1 \\
      \end{array}
      \right],
  A_2=\left[
      \begin{array}{cc|c}
        a_2 & b_2 & \tau_2 \\
        c_2 & d_2 & \eta_2 \\
      \end{array}
      \right],
  A_3=\left[
      \begin{array}{cc|c}
        a_3 & b_3 & \tau_3 \\
        c_3 & d_3 & \eta_3 \\
      \end{array}
      \right],
  \]
  satisfy
  \begin{equation}
    \left[
      \begin{array}{cc}
        a_3 & b_3 \\
        c_3 & d_3 \\
      \end{array}
    \right]
    =\left[
      \begin{array}{cc}
        a_2 & b_2 \\
        c_2 & d_2 \\
      \end{array}
    \right]
    \left[\begin{array}{cc}
        a_1 & b_1 \\
        c_1 & d_1 \\
      \end{array}
      \right],\label{eq:A1}
      \end{equation}
      \begin{equation}
      \left(
        \begin{array}{c}
            \tau_3\\
            \eta_3 \\
        \end{array}
        \right)
        =\left[
      \begin{array}{cc}
        a_2 & b_2 \\
        c_2 & d_2 \\
      \end{array}
      \right]
      \left(
             \begin{array}{c}
               \tau_1\\
               \eta_1\\
             \end{array}
           \right)
           +\left(
             \begin{array}{c}
               \tau_2\\
               \eta_2 \\
             \end{array}
           \right).
      \end{equation}
  Then, we have
  \begin{equation}\label{OLCT:pro1}
  \mathcal{O}_{A_2}[\mathcal{O}_{A_1}f(v)](u)=e^{j\phi}\mathcal{O}_{A_3}f(u),
  \end{equation}
  where
  \[
  \phi=-\frac{a_2c_2}{2}\tau_1^2-b_2c_2\tau_1\eta_1-\frac{b_2d_2}{2}\eta_1^2-(\tau_1c_2+\eta_1d_2)\tau_2.
  \]
  \item[$2)$] Inverse OLCT: According to the above additivity property, the inverse OLCT of $\mathcal{O}_{A}f(u)$ with parameter
  $A=\left[
      \begin{array}{cc|c}
        a & b & \tau \\
        c & d & \eta \\
      \end{array}
      \right]$
  is given by
  \begin{eqnarray}
  f(t)=e^{j\frac{cd}{2}\tau^2-jad\tau\eta+j\frac{ab}{2}\eta^2}\int_{-\infty}^{+\infty}\mathcal{O}_{A}f(u)K_{A^{-1}}(u,t){\rm{d}}u,\label{eq:R-OLCT}
  \end{eqnarray}
  where
  \[
  A^{-1}=\left[
      \begin{array}{cc|c}
        d & -b & b\eta-d\tau \\
        -c & a & c\tau-a\eta \\
      \end{array}
      \right].
  \]
  \item[$3)$] Generalized Parseval Formula:
  \begin{equation}\label{OLCT:pro3}
  \int_{\mathbb{R}}f(t)\overline{g(t)}\textrm{d}t=\int_{\mathbb{R}}\mathcal{O}_{A}f(u)\overline{\mathcal{O}_{A}g(u)}\textrm{d}u,
  \end{equation}
  where $\bar{\cdot}$ denotes the complex conjugate.
\end{enumerate}

\section{Two Uncertainty Principles for the OLCT}\label{sec:O2}

Uncertainty principle for the FT plays a vital role in time-frequency signal analysis. It can express limitations on simultaneous concentration of a function and its FT. There exist many different forms of uncertainty principles, like Heisenberg's uncertainty principle, Hardy's uncertainty principle, Donoho-Stark's uncertainty principle, and Amrein-Berthier-Benedicks's uncertainty principle, in terms of the meaning of ``concentration''. To investigate simultaneous concentration of a function and its OLCT, it is natural to extend the above uncertainty principles for the OLCT. In \cite{Stern2007}, the classical Heisenberg's uncertainty principle for the OLCT has been proposed.
In this section, therefore, we complement it by considering another two common uncertainty principles, i.e., Donoho-Stark's uncertainty principle and Amrein-Berthier-Benedicks's uncertainty principle, for the OLCT.

Before presenting our results, let us define the FT of a function $f(t)\in L^2(\mathbb{R})$ by
$\mathcal{F}f(u)=\frac{1}{\sqrt{2\pi}}\int_{-\infty}^{+\infty}f(t)e^{-jut}{\rm{d}}t$ in this paper, and
recall the concept of $\epsilon$-concentrate of a function on a measurable set $\Omega\subseteq \mathbb{R}$, Donoho-Stark's uncertainty principle \cite{Grochenig2001,DS1989} and Amrein-Berthier-Benedicks's uncertainty principle for the FT \cite{AB1977,Benedicks1985,Grochenig2001} as follows.

\begin{Definition}\cite[Defintion~2.3.1]{Grochenig2001}\label{Def:OLCT:O1}
Given $\epsilon\ge 0$, a function $f(t)\in L^2(\mathbb{R})$ is $\epsilon$-concentrate on a measurable set $\Omega\subseteq \mathbb{R}$, if
\begin{equation}\label{Def:OLCT:O2}
\Big(\int_{\mathbb{R}\backslash\Omega}|f(t)|^2{\rm{d}}t\Big)^{1/2}\le\epsilon\|f\|_2.
\end{equation}
\end{Definition}

\begin{Proposition}\cite[Theorem~2.3.1 (Donoho-Stark)]{Grochenig2001}\label{Pro:OLCT:UP3}
Suppose that a nonzero function $f(t)\in L^2(\mathbb{R})$ is $\epsilon_{\Omega}$-concentrate on a measurable set $\Omega\subseteq \mathbb{R}$, and its FT $\mathcal{F}f(u)$ is $\epsilon_{\Gamma}$-concentrate on a measurable set $\Gamma\subseteq \mathbb{R}$. Then,
\begin{equation}\label{Pro:OLCT:UP4}
|\Omega||\Gamma|\ge2\pi(1-\epsilon_{\Omega}-\epsilon_{\Gamma})^2,
\end{equation}
where $|\Omega|$ and $|\Gamma|$ denote the measures of the sets $\Omega$ and $\Gamma$.
\end{Proposition}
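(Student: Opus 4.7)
The plan is to exploit the standard time–limiting and band–limiting operators and to bound a single composition in two different ways: from below via the concentration hypotheses and from above via a kernel computation. Set $P_\Omega f = \chi_\Omega f$ and $Q_\Gamma f = \mathcal{F}^{-1}(\chi_\Gamma \mathcal{F} f)$. The hypothesis that $f$ is $\epsilon_\Omega$–concentrated on $\Omega$ reads $\|f - P_\Omega f\|_2 \le \epsilon_\Omega \|f\|_2$, and Plancherel's theorem translates the $\epsilon_\Gamma$–concentration of $\mathcal{F}f$ on $\Gamma$ into $\|f - Q_\Gamma f\|_2 \le \epsilon_\Gamma \|f\|_2$.

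First I would derive a lower bound on $\|Q_\Gamma P_\Omega f\|_2$. Since $Q_\Gamma$ is an orthogonal projection on $L^2(\mathbb{R})$ and hence has operator norm at most $1$, the decomposition $f - Q_\Gamma P_\Omega f = (f - Q_\Gamma f) + Q_\Gamma(f - P_\Omega f)$ and the triangle inequality yield $\|f - Q_\Gamma P_\Omega f\|_2 \le (\epsilon_\Omega + \epsilon_\Gamma)\|f\|_2$, so $\|Q_\Gamma P_\Omega f\|_2 \ge (1 - \epsilon_\Omega - \epsilon_\Gamma)\|f\|_2$. Assuming $1 - \epsilon_\Omega - \epsilon_\Gamma > 0$ (otherwise there is nothing to prove), this shows $\|Q_\Gamma P_\Omega\|_{\mathrm{op}} \ge 1 - \epsilon_\Omega - \epsilon_\Gamma$.

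Next I would compute the Hilbert–Schmidt norm of $Q_\Gamma P_\Omega$. Assuming $|\Omega|,|\Gamma|<\infty$ (otherwise the inequality is vacuous), unwinding the definitions and interchanging integrals by Fubini shows that $Q_\Gamma P_\Omega$ is the integral operator with kernel
\[
K(t,s) = \frac{1}{2\pi}\,\chi_\Omega(s) \int_\Gamma e^{ju(t-s)}\,du.
\]
For each fixed $s \in \Omega$, the inner integral is $\sqrt{2\pi}\,\mathcal{F}^{-1}[\chi_\Gamma](t-s)$ up to the paper's normalization, so Plancherel's theorem gives $\int_{\mathbb{R}} |K(t,s)|^2\,dt = |\Gamma|/(2\pi)$. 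Integrating in $s$ yields $\|Q_\Gamma P_\Omega\|_{\mathrm{HS}}^2 = |\Omega||\Gamma|/(2\pi)$. Combining with the inequality $\|Q_\Gamma P_\Omega\|_{\mathrm{op}}\le\|Q_\Gamma P_\Omega\|_{\mathrm{HS}}$ gives $(1 - \epsilon_\Omega - \epsilon_\Gamma)^2 \le |\Omega||\Gamma|/(2\pi)$, which is exactly the claimed bound.

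The only delicate point is the kernel computation: one must track the $\sqrt{2\pi}$ factors consistent with the normalization $\mathcal{F}f(u)=(2\pi)^{-1/2}\int f(t)e^{-jut}\,dt$ used in the paper, and justify the Fubini interchange, for which the assumption that both $|\Omega|$ and $|\Gamma|$ are finite is essential (it also guarantees that $Q_\Gamma P_\Omega$ is Hilbert–Schmidt, so the final operator norm estimate is legitimate).
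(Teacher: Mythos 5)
The paper does not prove this Proposition at all --- it is quoted verbatim (up to the $2\pi$ normalization) from the cited reference, Gr\"ochenig's \emph{Foundations of Time-Frequency Analysis}, Theorem~2.3.1. Your argument is correct and is essentially the standard proof given there: bound $\|Q_\Gamma P_\Omega\|_{\mathrm{op}}$ from below by $1-\epsilon_\Omega-\epsilon_\Gamma$ via the triangle inequality and the projection property of $Q_\Gamma$, compute $\|Q_\Gamma P_\Omega\|_{\mathrm{HS}}^2=|\Omega||\Gamma|/(2\pi)$ from the kernel, and compare the two norms; your bookkeeping of the $2\pi$ factors matches the paper's normalization of $\mathcal{F}$. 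The only nitpick is your parenthetical ``otherwise there is nothing to prove'': if $\epsilon_\Omega+\epsilon_\Gamma>1$ the stated inequality still has a strictly positive right-hand side and in fact can fail (e.g.\ $\Omega=\Gamma=\emptyset$ with large $\epsilon$'s), so the result should be read with the implicit standing assumption $\epsilon_\Omega+\epsilon_\Gamma\le 1$, as in the source.
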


\begin{Proposition}\cite[Theorem~2.3.3 (Amrein-Berthier-Benedicks)]{Grochenig2001}\label{Pro:OLCT:C1}
Let $f(t)\in L^1(\mathbb{R})$,\;${\rm{supp}}(f)\subseteq\Omega$, and
${\rm{supp}}(\mathcal{F}f)\subseteq\Gamma,$ where $\Omega$ and $\Gamma$ are measurable sets in $\mathbb{R}$. If $|\Omega||\Gamma|<+\infty$, then $f(t)=0$.
\end{Proposition}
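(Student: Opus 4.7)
The plan is to follow Benedicks's classical argument, which combines a scaling reduction, a Fourier-series/periodization trick, and the identity principle for trigonometric polynomials. First, I would note that the product $|\Omega||\Gamma|$ is invariant under the dilation $f(t)\mapsto f(\lambda t)$, which replaces $\Omega$ by $\lambda^{-1}\Omega$ and $\Gamma$ by $\lambda\Gamma$. Hence there is no loss of generality in assuming $|\Gamma|<1$, which is all that I will need for the lattice $\mathbb{Z}$ to have room to avoid $\Gamma$.

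Next, for each real parameter $a$ I would modulate $f$ to $g_a(t)=e^{jat}f(t)$, noting that $\mathcal{F}g_a(u)=\mathcal{F}f(u-a)$, and form the $2\pi$-periodization
\[
G_a(x)=\sum_{n\in\mathbb{Z}}g_a(x+2\pi n).
\]
Because $f\in L^1(\mathbb{R})$, Fubini yields $G_a\in L^1([0,2\pi])$, and its Fourier coefficients are, up to a fixed constant, the samples $\mathcal{F}f(k-a)$ for $k\in\mathbb{Z}$. A first Fubini applied to $\chi_\Gamma$ then shows that the set
\[
E=\{a\in[0,1):(a+\mathbb{Z})\cap\Gamma=\emptyset\}
\]
has measure at least $1-|\Gamma|>0$. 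For every $a\in E$ all Fourier coefficients of $G_a$ vanish, so $G_a\equiv 0$ in $L^1([0,2\pi])$.

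To finish, I would read the identity $G_a\equiv 0$ as
\[
\sum_{n\in\mathbb{Z}}f(x+2\pi n)\,e^{j2\pi a n}=0\quad\text{for a.e.\ }x,
\]
valid for every $a\in E$. The main obstacle is that, a priori, this is an infinite exponential series in $n$, not a trigonometric polynomial in $a$; this is exactly where the finite-measure assumption on $\Omega$ is decisive. A second Fubini, applied this time to $\chi_\Omega$, shows that for a.e.\ $x$ only finitely many integers $n$ satisfy $x+2\pi n\in\Omega$, so the sum collapses to a genuine trigonometric polynomial in $a$. After one more Fubini swap of the roles of $a$ and $x$, I find that for a.e.\ $x$ this polynomial vanishes on a positive-measure subset of $E$, and therefore vanishes identically in $a$. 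Its coefficients $f(x+2\pi n)$ must then be zero for every $n\in\mathbb{Z}$ and a.e.\ $x\in[0,2\pi)$, i.e.\ $f\equiv 0$ almost everywhere, as claimed.
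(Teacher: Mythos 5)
This proposition is quoted in the paper from Gr\"ochenig's book (Theorem~2.3.3) and is not proved there, so there is no in-paper proof to compare against; I can only judge your argument on its own. It is the classical Benedicks periodization proof --- dilation reduction to $|\Gamma|<1$, vanishing of the Fourier coefficients $\mathcal{F}f(k-a)$ of the $2\pi$-periodization for all $a$ in a set of measure at least $1-|\Gamma|>0$, collapse of the periodized sum to a genuine trigonometric polynomial because $|\Omega|<\infty$ forces only finitely many translates $x+2\pi n$ to lie in $\Omega$ for a.e.\ $x$, and the fact that a nonzero trigonometric polynomial cannot vanish on a set of positive measure --- and it is correct and complete; this is essentially the proof in the cited source. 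One cosmetic remark: with the modulation $e^{jat}$ the relevant samples are $\mathcal{F}f(k-a)$, so the exceptional set is more precisely $\{a:(\mathbb{Z}-a)\cap\Gamma=\emptyset\}$ rather than $\{a:(a+\mathbb{Z})\cap\Gamma=\emptyset\}$; the Fubini measure estimate is unaffected.
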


Motivated by Propositions \ref{Pro:OLCT:UP3}, \ref{Pro:OLCT:C1}, we next generalize the corresponding results for the OLCT. We first present Donoho-Stark's uncertainty principle for the OLCT in the following theorem.

\begin{Theorem}\label{Thm:OLCT:A1}
Suppose that a nonzero function $f(t)\in L^2(\mathbb{R})$ is $\epsilon_{\Omega}$-concentrate on a measurable set $\Omega\subseteq \mathbb{R}$, and its OLCT $\mathcal{O}_{A}f(u)$ is $\epsilon_{\Gamma}$-concentrate on a measurable set $\Gamma\subseteq \mathbb{R}$. Then,
\begin{equation}\label{Thm:OLCT:A2}
|\Omega||\Gamma|\ge2\pi b(1-\epsilon_{\Omega}-\epsilon_{\Gamma})^2.
\end{equation}
\end{Theorem}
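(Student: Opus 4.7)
The plan is to mimic the classical Donoho--Stark argument for the Fourier transform, replacing the FT by $\mathcal{O}_A$ and using the fact that the OLCT kernel has constant modulus $|K_A(t,u)|=1/\sqrt{2\pi b}$. The two ingredients are (a) a lower bound on $\|P_\Omega Q_\Gamma^A f\|_2$ coming from the concentration hypotheses plus Parseval, and (b) an upper bound on the operator norm of $P_\Omega Q_\Gamma^A$ coming from its Hilbert--Schmidt norm. Comparing them will produce the inequality.

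First I would introduce the time-limiting operator $P_\Omega f=\chi_\Omega f$ and the ``$A$-band-limiting'' operator $Q_\Gamma^A f=\mathcal{O}_A^{-1}(\chi_\Gamma\,\mathcal{O}_A f)$; both are contractions on $L^2(\mathbb{R})$. The concentration hypotheses translate directly into $\|f-P_\Omega f\|_2\le\varepsilon_\Omega\|f\|_2$, and via the Parseval formula \eqref{OLCT:pro3} also into $\|f-Q_\Gamma^A f\|_2=\|\mathcal{O}_A f-\chi_\Gamma\mathcal{O}_A f\|_2\le\varepsilon_\Gamma\|f\|_2$. A triangle inequality of the form
\begin{equation*}
\|f-P_\Omega Q_\Gamma^A f\|_2\le\|f-P_\Omega f\|_2+\|P_\Omega\|_{\mathrm{op}}\|f-Q_\Gamma^A f\|_2\le(\varepsilon_\Omega+\varepsilon_\Gamma)\|f\|_2
\end{equation*}
then yields the lower bound $\|P_\Omega Q_\Gamma^A f\|_2\ge(1-\varepsilon_\Omega-\varepsilon_\Gamma)\|f\|_2$.

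The substantive step is the upper bound. Writing $P_\Omega Q_\Gamma^A$ as an integral operator, its kernel has the form
\begin{equation*}
k(t,s)=\chi_\Omega(t)\int_\Gamma \overline{K_A(t,u)}\,K_A(s,u)\,{\rm d}u
\end{equation*}
(up to an overall unit-modulus phase coming from the inverse formula \eqref{eq:R-OLCT}). For each fixed $t$, the inner integral is, up to conjugation, $\mathcal{O}_A^{*}$ applied to $\chi_\Gamma(\cdot)K_A(t,\cdot)$, so by Parseval its $L^2$-norm in $s$ equals $\|\chi_\Gamma K_A(t,\cdot)\|_2^2=|\Gamma|/(2\pi b)$, using the identity $|K_A|^2=1/(2\pi b)$. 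Integrating over $t\in\Omega$ gives the clean bound
\begin{equation*}
\|P_\Omega Q_\Gamma^A\|_{\mathrm{op}}^2\le\|P_\Omega Q_\Gamma^A\|_{HS}^2=\frac{|\Omega||\Gamma|}{2\pi b}.
\end{equation*}
Combining the two bounds and dividing by $\|f\|_2^2$ gives \eqref{Thm:OLCT:A2}.

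The main obstacle is the Hilbert--Schmidt computation: one has to recognize that the $u$-integral in $k(t,s)$ is secretly the OLCT (or its adjoint) applied to a truncation of the kernel itself, so that unitarity reduces the $s$-integration to $|\Gamma|/(2\pi b)$. Once this observation is made, everything else is bookkeeping identical to the Fourier case, and the factor $b$ in \eqref{Thm:OLCT:A2} traces precisely back to the normalizing constant in $K_A$.
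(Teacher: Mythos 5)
Your proposal is correct, but it takes a genuinely different route from the paper. The paper does not redo the Donoho--Stark operator argument at all: it factors the OLCT as $\mathcal{O}_Af(u)=b^{-1/2}e^{j\varphi(u)}G(u)$ with $G(ub)=\mathcal{F}\bigl[f(\cdot)e^{j\frac{a}{2b}(\cdot)^2+j\frac{\tau}{b}(\cdot)}\bigr](u)$, observes that chirp multiplication preserves moduli (hence $\epsilon$-concentration of $f$ on $\Omega$) and that the dilation $u\mapsto ub$ turns $\epsilon_\Gamma$-concentration on $\Gamma$ into concentration on $\Gamma/b$, and then simply invokes the known Fourier-transform Donoho--Stark inequality to get $|\Omega|\,|\Gamma/b|\ge 2\pi(1-\epsilon_\Omega-\epsilon_\Gamma)^2$. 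You instead re-prove the result from scratch: the lower bound $\|P_\Omega Q_\Gamma^A f\|_2\ge(1-\epsilon_\Omega-\epsilon_\Gamma)\|f\|_2$ via the triangle inequality and the generalized Parseval formula \eqref{OLCT:pro3}, and the upper bound via the Hilbert--Schmidt norm of $P_\Omega Q_\Gamma^A$, where the key computation $\int_{\mathbb{R}}\bigl|\int_\Gamma \overline{K_A(t,u)}K_A(s,u)\,{\rm d}u\bigr|^2{\rm d}s=\int_\Gamma|K_A(t,u)|^2\,{\rm d}u=|\Gamma|/(2\pi b)$ uses the unitarity of $\mathcal{O}_A$ (equivalently its adjoint) together with $|K_A|^2\equiv 1/(2\pi b)$. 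Both arguments are sound; yours is self-contained and makes the origin of the factor $b$ transparent (it is exactly the normalization of the kernel), whereas the paper's is shorter because it treats the FT case as a black box. Two cosmetic remarks on your write-up: the quantity you compute for fixed $t$ is the \emph{squared} $L^2$-norm in $s$ (you write ``$L^2$-norm \dots equals $\|\chi_\Gamma K_A(t,\cdot)\|_2^2$''), and one should note that the final squaring step is legitimate because the case $1-\epsilon_\Omega-\epsilon_\Gamma<0$ is vacuous; neither point affects correctness.
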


\begin{proof}
From Definition \ref{def:OLCT}, the OLCT with parameter $A$ can be equivalently written as
\begin{equation}\label{Thm:OLCT:A4}
\mathcal{O}_{A}f(u)=\sqrt{\frac{1}{jb}}e^{-j\frac{1}{b}u(d\tau-b\eta)+j\frac{d}{2b}(u^2+\tau^2)}G(u),
\end{equation}
where
\begin{equation}\label{Thm:OLCT:A3}
G(u)=\frac{1}{\sqrt{2\pi}}\int_{-\infty}^{+\infty}f(t)e^{j\frac{a}{2b}t^2-j\frac{1}{b}t(u-\tau)}{\rm{d}}t.
\end{equation}
In this way, it is easy to see that
\begin{equation}\label{Thm:OLCT:A5}
|\mathcal{O}_{A}f(u)|=\sqrt{\frac{1}{b}}|G(u)|.
\end{equation}
Since $\mathcal{O}_{A}f(u)$ is $\epsilon_{\Gamma}$-concentrate on a measurable set $\Gamma\subseteq \mathbb{R}$, we have
\begin{equation}\label{Thm:OLCT:A6}
\Big(\int_{\mathbb{R}\backslash\Gamma}|\mathcal{O}_{A}f(u)|^2{\rm{d}}u\Big)^{1/2}\le\epsilon_{\Gamma}\|\mathcal{O}_{A}f\|_2.
\end{equation}
Then, substituting (\ref{Thm:OLCT:A5}) into (\ref{Thm:OLCT:A6}), we have
\begin{equation}\label{Thm:OLCT:A7}
\Big(\int_{\mathbb{R}\backslash\Gamma}|G(u)|^2{\rm{d}}u\Big)^{1/2}\le\epsilon_{\Gamma}\|G\|_2,
\end{equation}
Thus, $G(ub)$ is $\epsilon_{\Gamma}$-concentrate on a measurable set $\Gamma/b\subseteq \mathbb{R}$. Furthermore, it is shown in (\ref{Thm:OLCT:A3}) that $G(ub)$ is actually the FT of $g(t)=f(t)e^{j\frac{a}{2b}t^2+j\frac{1}{b}t\tau}$.
Moreover, since $|g(t)|=|f(t)|$ and $f(t)$ is $\epsilon_{\Omega}$-concentrate on a measurable set $\Omega\subseteq \mathbb{R}$, i.e.,
\begin{equation}\label{Thm:OLCT:A8}
\Big(\int_{\mathbb{R}\backslash\Omega}|f(t)|^2{\rm{d}}t\Big)^{1/2}\le\epsilon_{\Omega}\|f\|_2,
\end{equation}
we obtain
\begin{equation}\label{Thm:OLCT:A9}
\Big(\int_{\mathbb{R}\backslash\Omega}|g(t)|^2{\rm{d}}t\Big)^{1/2}\le\epsilon_{\Omega}\|g\|_2.
\end{equation}
In other words, $g(t)$ is $\epsilon_{\Omega}$-concentrate on a measurable set $\Omega\subseteq \mathbb{R}$. As for the function $g(t)$ and its FT $G(ub)$, we have, by Proposition~\ref{Pro:OLCT:UP3},
\begin{equation}\label{Thm:OLCT:A10}
|\Omega|\Big|\frac{\Gamma}{b}\Big|\ge2\pi(1-\epsilon_{\Omega}-\epsilon_{\Gamma})^2.
\end{equation}
Therefore,
\[
|\Omega||\Gamma|\ge2\pi b(1-\epsilon_{\Omega}-\epsilon_{\Gamma})^2.
\]
This completes the proof.
\end{proof}

\begin{Corollary}\label{Coro:OLCT:B}
Let $f(t)\in L^2(\mathbb{R})$,\;${\rm{supp}}(f)\subseteq\Omega$, and ${\rm{supp}}(\mathcal{O}_{A}f)\subseteq\Gamma$, where $\Omega$ and $\Gamma$ are measurable sets in $\mathbb{R}$. Then,
\[
|\Omega||\Gamma|\ge2\pi b.
\]
\end{Corollary}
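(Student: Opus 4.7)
The plan is to deduce this corollary as the zero-error specialization of Theorem \ref{Thm:OLCT:A1}. The key observation is that full support containment is the strongest possible form of $\epsilon$-concentration: if $\mathrm{supp}(f)\subseteq \Omega$, then the tail integral $\int_{\mathbb{R}\setminus\Omega} |f(t)|^2\,\mathrm{d}t$ vanishes outright, so inequality (\ref{Def:OLCT:O2}) is satisfied with $\epsilon=0$. Hence $f$ is $0$-concentrate on $\Omega$, and by the identical argument applied to the OLCT side, $\mathcal{O}_A f$ is $0$-concentrate on $\Gamma$.

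With this setup in hand, I would implicitly take $f$ to be nonzero (otherwise the claim degenerates, since one could choose $\Omega=\Gamma=\emptyset$) and apply Theorem \ref{Thm:OLCT:A1} with $\epsilon_\Omega=\epsilon_\Gamma=0$. The theorem then immediately yields
\[
|\Omega||\Gamma| \;\ge\; 2\pi b\,(1-0-0)^2 \;=\; 2\pi b,
\]
which is exactly the desired inequality.

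I do not anticipate any substantive obstacle here, since the corollary is a direct specialization of the theorem; the only point that genuinely requires attention is making the nonvanishing assumption on $f$ explicit so that the $\epsilon$-concentration framework is meaningful. It is also worth contrasting the statement with Proposition \ref{Pro:OLCT:C1}: the Amrein--Berthier--Benedicks type result supplies a qualitative conclusion ($f\equiv 0$) under a finiteness hypothesis on $|\Omega|$ and $|\Gamma|$, whereas the present corollary instead converts the support assumption into a quantitative lower bound $2\pi b$ on the measure product, with the explicit dependence on the OLCT parameter $b$ coming directly from the factor $b$ introduced by the change of variable $\Gamma\mapsto\Gamma/b$ in the proof of Theorem \ref{Thm:OLCT:A1}.
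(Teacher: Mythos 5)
Your proposal is correct and matches the paper's own proof: both deduce the corollary by observing that support containment is exactly $0$-concentration and then setting $\epsilon_{\Omega}=\epsilon_{\Gamma}=0$ in Theorem~\ref{Thm:OLCT:A1}. Your explicit remark that $f$ must be taken nonzero (as required by the theorem's hypothesis, and since the statement degenerates for $f\equiv 0$) is a small point of care that the paper's proof leaves implicit.
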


\begin{proof}
From Definition~\ref{Def:OLCT:O1}, we know that a function $f(t)$ is 0-concentrate on a measurable set $\Omega\subseteq \mathbb{R}$ if and only if ${\rm{supp}}(f)=\Omega.$ Hence, by letting $\epsilon_{\Omega}=\epsilon_{\Gamma}=0$ in Theorem~\ref{Thm:OLCT:A1}, we get $|\Omega||\Gamma|\ge2\pi b$.
\end{proof}

We next present Amrein-Berthier-Benedicks's uncertainty principle for the OLCT in the following theorem.

\begin{Theorem}\label{Thm:OLCT:D1}
Let $f(t)\in L^1(\mathbb{R})$, ${\rm{supp}}(f)\subseteq\Omega$, and ${\rm{supp}}(\mathcal{O}_{A}f)\subseteq\Gamma,$ where $\Omega$ and $\Gamma$ are measurable sets in $\mathbb{R}$. If $|\Omega||\Gamma|<+\infty$, then
$f(t)=0.$
\end{Theorem}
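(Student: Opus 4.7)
The plan is to follow exactly the same reduction strategy that appears in the proof of Theorem~\ref{Thm:OLCT:A1}, namely to strip away the chirp and scaling factors so that the OLCT of $f$ reduces to an ordinary Fourier transform of an auxiliary function $g$, and then to invoke Proposition~\ref{Pro:OLCT:C1} (the classical Amrein-Berthier-Benedicks principle for the FT).

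More concretely, I would first set $g(t)=f(t)e^{j\frac{a}{2b}t^2+j\frac{1}{b}t\tau}$, so that $|g(t)|=|f(t)|$. Using the factorization (\ref{Thm:OLCT:A4})-(\ref{Thm:OLCT:A5}) already established in the proof of Theorem~\ref{Thm:OLCT:A1}, I would observe that $G(ub)$ is, up to a constant, the Fourier transform of $g$, and that $|\mathcal{O}_A f(u)|$ equals $b^{-1/2}|G(u)|$. The next step is to transfer the support hypotheses to $g$ and $\mathcal{F}g$. Because $|g|=|f|$, the hypothesis $\mathrm{supp}(f)\subseteq\Omega$ gives $\mathrm{supp}(g)\subseteq\Omega$, and since $f\in L^1(\mathbb{R})$ and the chirp factor is bounded, $g\in L^1(\mathbb{R})$ as well. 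Similarly, since $|\mathcal{O}_A f(u)| = b^{-1/2}|G(u)|$ and $\mathrm{supp}(\mathcal{O}_A f)\subseteq\Gamma$, we have $\mathrm{supp}(G)\subseteq\Gamma$, which after the change of variable $u\mapsto ub$ yields $\mathrm{supp}(\mathcal{F}g)\subseteq \Gamma/b$.

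With these reductions in place, the hypothesis $|\Omega||\Gamma|<+\infty$ becomes $|\Omega|\,|\Gamma/b|=|\Omega||\Gamma|/b<+\infty$, so Proposition~\ref{Pro:OLCT:C1} applied to $g$ forces $g(t)=0$ almost everywhere. Finally, since $|f(t)|=|g(t)|=0$, we conclude $f(t)=0$, completing the argument.

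I do not expect any genuine obstacle: the proof is essentially a bookkeeping exercise parallel to Theorem~\ref{Thm:OLCT:A1}, with the measure-theoretic content being absorbed into the classical FT statement in Proposition~\ref{Pro:OLCT:C1}. The only point that requires a moment's care is verifying that $g\in L^1(\mathbb{R})$ (which is immediate from $|g|=|f|$ and $f\in L^1$) and that the support relation passes correctly through the dilation $u\mapsto u/b$, contributing the factor of $b$ that is absorbed by the finiteness hypothesis.
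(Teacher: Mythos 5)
Your proposal is correct and follows essentially the same route as the paper's own proof: the same auxiliary chirped function $g(t)=f(t)e^{j\frac{a}{2b}t^2+j\frac{1}{b}t\tau}$, the same identification of $G(ub)$ with $\mathcal{F}g$ via the factorization (\ref{Thm:OLCT:A4})--(\ref{Thm:OLCT:A3}), and the same final appeal to Proposition~\ref{Pro:OLCT:C1}. No gaps.
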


\begin{proof}
Similar to the proof of Theorem \ref{Thm:OLCT:A1}, let the OLCT $\mathcal{O}_{A}f(u)$ be rewritten in the form of (\ref{Thm:OLCT:A4}), where $G(u)$ is given by (\ref{Thm:OLCT:A3}). Since ${\rm{supp}}(\mathcal{O}_{A}f)\subseteq\Gamma$, we easily get ${\rm{supp}}(G(u))\subseteq\Gamma$. Moreover, by letting $g(t)=f(t)e^{j\frac{a}{2b}t^2+j\frac{1}{b}t\tau}$, we have $g(t)\in L^1(\mathbb{R})$ and ${\rm{supp}}(g)\subseteq\Omega$. This is due to the fact that $|f(t)|=|g(t)|$, $f(t)\in L^1(\mathbb{R})$, and ${\rm{supp}}(f)\subseteq\Omega$. Considering that ${\rm{supp}}(G(ub))\subseteq\Gamma/b$ and $G(ub)$ is the FT of $g(t)$, we obtain $g(t)=0$ by Proposition~\ref{Pro:OLCT:C1}. Hence, we get $f(t)=0$.
\end{proof}

One can see that Theorem \ref{Thm:OLCT:A1} and Theorem \ref{Thm:OLCT:D1} for the OLCT with the specific parameter
$A=\left[
      \begin{array}{cc|c}
        0 & 1 & 0 \\
        -1 & 0 & 0 \\
      \end{array}
      \right]$
(i.e., the FT), coincide with Proposition \ref{Pro:OLCT:UP3} and Proposition \ref{Pro:OLCT:C1}, respectively. Furthermore, Donoho-Stark's uncertainty principle and Amrein-Berthier-Benedicks's uncertainty principle for the LCT can also be directly obtained by substituting
$A=\left[
      \begin{array}{cc|c}
        a & b & 0 \\
        c & d & 0 \\
      \end{array}
      \right]$
for the OLCT in Theorem \ref{Thm:OLCT:A1} and Theorem  \ref{Thm:OLCT:D1}.

\section{Lieb's Uncertainty Principle for the Short-Time OLCT}\label{sec:O3}

Similar to the FT or the LCT, the OLCT cannot reveal the local OLCT-frequency information due to its global kernel. In this section, therefore, we introduce the short-time OLCT, which generalizes the short-time LCT by substituting the LCT kernel with the OLCT kernel in the definition of the short-time LCT \cite{KXZ2012,KX2012}. We then generalize Lieb's uncertainty principle for the short-time LCT in \cite{KXZ2012} to our proposed short-time OLCT.

\begin{Definition}\label{def:WOLCT}
Given a function $g(t)\in L^{\infty}(\mathbb{R})$, the short-time OLCT of a function $f(t)\in L^1(\mathbb{R})$ with a window $g(t)$ is defined by
\begin{equation}\label{eq:WOLCT}
V_{g,A}f(x,u)=\int_{-\infty}^{+\infty}f(t)\overline{g(t-x)}K_{A}(t,u){\rm{d}}t, \quad \textrm{for}\;\;x,\;u\in \mathbb{R},
\end{equation}
where $K_{A}(t,u)$ is given by (\ref{eq:OLCT:1}),\;$a,\;b,\;c,\;d,\;\tau,\;\eta\in\mathbb{R}$,\;$b>0$, and $ad-bc=1$.
\end{Definition}

Based on H\"{o}lder's inequality, the short-time OLCT $V_{g,A}f(x,u)$ in (\ref{eq:WOLCT}) of a function $f(t)\in L^p(\mathbb{R})$ with a window $g(t)\in L^q(\mathbb{R})$ is well-defined for any $p,\;q\in[1,+\infty]$ satisfying $1/p+1/q=1$. On the other hand, from the definition of the short-time OLCT $V_{g,A}f$ in (\ref{eq:WOLCT}), the relation between the short-time OLCT and the FT is given by
\begin{equation}\label{Rem:WOLCT:D1}
V_{g,A}f(x,bu+\tau)=\sqrt{\frac{1}{jb}}\mathcal{F}\big[f(\cdot)\overline{g(\cdot-x)}e^{j\frac{a}{2b}(\cdot)^2}\big]\big(u\big)
          e^{j\frac{d}{2b}(bu)^2+j(bu+\tau)\eta},
\end{equation}
where $\mathcal{F}$ denotes the FT operator. By (\ref{Rem:WOLCT:D1}), we have, for $f(t),\;g(t)\in L^2(\mathbb{R})$,
\begin{equation}\label{Lem:WOLCT:A1}
\|V_{g,A}f\|_2=\|f\|_2\|g\|_2.
\end{equation}
.

Before presenting Lieb's uncertainty principle for the short-time OLCT, we show a generalized Hausdorff-Young inequality in the following lemma, whose proof is similar to that of Hausdorff-Young inequality in \cite{GXX2009}.

\begin{Lemma}\label{Lem:WOLCT:UP}
Let
$A_1=\left[
      \begin{array}{cc|c}
        a_1 & b_1 & \tau_1 \\
        c_1 & d_1 & \eta_1 \\
      \end{array}
      \right]$,
$A_2=\left[
      \begin{array}{cc|c}
        a_2 & b_2 & \tau_2 \\
        c_2 & d_2 & \eta_2 \\
      \end{array}
      \right]$,
and $a_2b_1-a_1b_2>0.$ Let $1\le q\le 2$ and $p$ satisfy $1/p+1/q=1$. Then, for all $f(t)\in L^{q}(\mathbb{R})$, we have
\begin{equation}\label{Lem:WOLCT:UP1}
\|\mathcal{O}_{A_1}f\|_{p}\le \Big((2\pi)^{1/p-1/q}q^{1/q}p^{-1/p}\Big)^{1/2}(a_2b_1-a_1b_2)^{1/2-1/q}\|\mathcal{O}_{A_2}f\|_{q}.
\end{equation}
\end{Lemma}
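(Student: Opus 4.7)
The plan is to reduce the inequality to the classical sharp Hausdorff--Young (Babenko--Beckner) inequality for the Fourier transform, by using the additivity property of the OLCT to realize $\mathcal{O}_{A_1}f$, up to a phase factor, as $\mathcal{O}_{B}(\mathcal{O}_{A_2}f)$ for an appropriately chosen parameter matrix $B$ whose ``$b$-entry'' is exactly $a_2b_1-a_1b_2$.

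First, I would construct $B$ by taking its $2\times 2$ block to be
$\begin{pmatrix} a_1 & b_1\\ c_1 & d_1\end{pmatrix}\begin{pmatrix} d_2 & -b_2\\ -c_2 & a_2\end{pmatrix}$
and its translation part to be
$\begin{pmatrix}\tau_B\\\eta_B\end{pmatrix}:=\begin{pmatrix}\tau_1\\\eta_1\end{pmatrix}-B_{2\times 2}\begin{pmatrix}\tau_2\\\eta_2\end{pmatrix}$,
so that the triple $(A_2,B,A_1)$ fulfills the compatibility conditions in (\ref{eq:A1}) of the additivity property. The $(1,2)$-entry of $B$ is then $b_B:=a_2b_1-a_1b_2$, which is strictly positive by hypothesis, so $\mathcal{O}_B$ lies in the setting $b>0$ we have fixed. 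The additivity property (\ref{OLCT:pro1}) now yields $\mathcal{O}_{B}(\mathcal{O}_{A_2}f)(u)=e^{j\phi}\mathcal{O}_{A_1}f(u)$ for some real phase $\phi$, hence $|\mathcal{O}_{A_1}f(u)|=|\mathcal{O}_{B}(\mathcal{O}_{A_2}f)(u)|$.

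Next, setting $h:=\mathcal{O}_{A_2}f$ and repeating the reduction already used in the proof of Theorem~\ref{Thm:OLCT:A1}, I would write $\mathcal{O}_B h$ as a unimodular factor times a Fourier transform: one has $|\mathcal{O}_B h(u)|=b_B^{-1/2}|G(u)|$, and after the substitution $u=b_Bv+\tau_B$ the function $G(b_Bv+\tau_B)$ coincides with $\mathcal{F}\bigl[h(\cdot)e^{ja_B(\cdot)^2/(2b_B)}\bigr](v)$, where $a_B$ is the $(1,1)$-entry of $B$. The change of variables then gives
\[
\|\mathcal{O}_B h\|_p=b_B^{1/p-1/2}\,\bigl\|\mathcal{F}\bigl[h(\cdot)e^{ja_B(\cdot)^2/(2b_B)}\bigr]\bigr\|_p.
\]
Since the chirp factor has modulus one, applying the sharp Babenko--Beckner Hausdorff--Young inequality for $\mathcal{F}$ in the normalization used here, namely $\|\mathcal{F}\varphi\|_p\le\bigl((2\pi)^{1/p-1/q}q^{1/q}p^{-1/p}\bigr)^{1/2}\|\varphi\|_q$ for $1\le q\le 2$ and $1/p+1/q=1$, yields
\[
\|\mathcal{O}_B h\|_p\le b_B^{1/p-1/2}\bigl((2\pi)^{1/p-1/q}q^{1/q}p^{-1/p}\bigr)^{1/2}\|h\|_q.
\]

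Finally, using $1/p+1/q=1$ to rewrite $1/p-1/2=1/2-1/q$, substituting $b_B=a_2b_1-a_1b_2$ and $h=\mathcal{O}_{A_2}f$, and recalling that $\|\mathcal{O}_{A_1}f\|_p=\|\mathcal{O}_B h\|_p$ from Step~1, one obtains exactly (\ref{Lem:WOLCT:UP1}). The only genuine subtleties are (i) checking that the translation part of $B$ can be chosen so that $(A_2,B,A_1)$ satisfies the additivity hypothesis, which is a small linear-algebra verification, and (ii) quoting the sharp Hausdorff--Young constant with the $(2\pi)^{-1/2}$-normalized $\mathcal{F}$ of this paper; the rest is a routine change of variables.
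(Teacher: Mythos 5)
Your proposal is correct and follows essentially the same route as the paper's proof: both introduce an intermediate parameter matrix (your $B$, the paper's $A_3$) whose off-diagonal entry is $a_2b_1-a_1b_2$, use the additivity property to pass between $\mathcal{O}_{A_1}f$ and $\mathcal{O}_{A_2}f$, and then strip off unimodular chirps and rescale to invoke the sharp Hausdorff--Young inequality for the $(2\pi)^{-1/2}$-normalized Fourier transform. The only (cosmetic) difference is direction: you compose forward from $\mathcal{O}_{A_2}f$ to $\mathcal{O}_{A_1}f$, whereas the paper writes $\mathcal{O}_{A_1}f$ as a chirped Fourier transform of an auxiliary $h$ and identifies $\|h\|_q$ with $b_3^{1/2-1/q}\|\mathcal{O}_{A_2}f\|_q$ by a change of variables.
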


\begin{proof}
Let
\[
A_3=\left[
      \begin{array}{cc|c}
        a_3 & b_3 & \tau_3 \\
        c_3 & d_3 & \eta_3 \\
      \end{array}
      \right],
\]
which satisfies
\begin{equation}
\left(
      \begin{array}{cc}
        a_2 & b_2 \\
        c_2 & d_2 \\
      \end{array}
    \right)
    =\left(
      \begin{array}{cc}
        d_3 & -b_3 \\
        -c_3 & a_3 \\
      \end{array}
    \right)
    \left(\begin{array}{cc}
        a_1 & b_1 \\
        c_1 & d_1 \\
      \end{array}
      \right),\label{eq:add1}
      \end{equation}
      and
      \begin{equation}
\left(
        \begin{array}{c}
            \tau_2\\
            \eta_2 \\
        \end{array}
        \right)
        =\left(
      \begin{array}{cc}
        d_3 & -b_3 \\
        -c_3 & a_3 \\
      \end{array}
      \right)
      \left(
             \begin{array}{c}
               \tau_1\\
               \eta_1\\
             \end{array}
           \right)
           +\left(
             \begin{array}{c}
               \tau_3\\
               \eta_3 \\
             \end{array}
           \right).
\end{equation}
It follows from (\ref{eq:add1}) that $b_3=a_2b_1-a_1b_2.$
Let
\begin{equation}
H(u)=\mathcal{O}_{A_1}f(u)e^{-j\frac{d_3}{2b_3}u^2}\label{Lem:WOLCT:UP2}
\end{equation}
and its inverse FT be given by
\begin{equation}
h(t)=\frac{1}{\sqrt{2\pi}}\int_{-\infty}^{+\infty}H(u)e^{jut}{\rm{d}}u.\label{Lem:WOLCT:UP3}
\end{equation}
From (\ref{Lem:WOLCT:UP3}) and Hausdorff-Young's inequality in \cite{Grochenig2001,GXX2009}, we obtain
\begin{equation}\label{Lem:WOLCT:UP5}
\|\mathcal{O}_{A_1}f\|_{p}=\|H\|_{p}\le \Big((2\pi)^{1/p-1/q}q^{1/q}p^{-1/p}\Big)^{1/2}\|h\|_{q}.
\end{equation}
Moreover, it follows from (\ref{OLCT:pro1}), (\ref{Lem:WOLCT:UP2}) and (\ref{Lem:WOLCT:UP3}) that
\begin{small}
\begin{align}
\|h\|_{q}^q=&\int_{-\infty}^{+\infty}|h(t)|^q{\rm{d}}t\nonumber\\
=&\frac{1}{b_3}\int_{-\infty}^{+\infty}\big|h\big(\frac{t-\tau_3}{b_3}\big)\big|^q{\rm{d}}t\nonumber\\
=&\frac{1}{b_3}\int_{-\infty}^{+\infty}\Big|\frac{1}{\sqrt{2\pi}}\int_{-\infty}^{+\infty}H(u)
        e^{ju\big(\frac{t-\tau_3}{b_3}\big)}{\rm{d}}u\Big|^q{\rm{d}}t\nonumber\\
=&\frac{1}{b_3}\int_{-\infty}^{+\infty}\Big|\frac{1}{\sqrt{2\pi}}\int_{-\infty}^{+\infty}\mathcal{O}_{A_1}f(u)e^{-j\frac{d_3}{2b_3}u^2
    +ju\big(\frac{t-\tau_3}{b_3}\big)}{\rm{d}}u\Big|^q{\rm{d}}t\nonumber\\
=&\frac{1}{b_3}\int_{-\infty}^{+\infty}\bigg|\frac{\sqrt{\frac{1}{-j2\pi b_3}}\int_{-\infty}^{+\infty}\mathcal{O}_{A_1}f(u)e^{-j\frac{d_3}{2b_3}u^2+
          ju\big(\frac{t-\tau_3}{b_3}\big)-j\frac{a_3}{2b_3}(t-\tau_3)^2+j\eta_3 t}{\rm{d}}u}
           {\sqrt{\frac{1}{-jb_3}}e^{-j\frac{a_3}{2b_3}(t-\tau_3)^2+j\eta_3 t}}\bigg|^q{\rm{d}}t\nonumber\\
=&b_3^{q/2-1}\int_{-\infty}^{+\infty}\big|\mathcal{O}_{A_3}[\mathcal{O}_{A_1}f(u)](t)\big|^q{\rm{d}}t\nonumber\\
=&b_3^{q/2-1}\|\mathcal{O}_{A_2}f\|_q^q.\label{Lem:WOLCT:UP6}
\end{align}
\end{small}
Combining (\ref{Lem:WOLCT:UP5}) and (\ref{Lem:WOLCT:UP6}), we have
\begin{eqnarray*}
\|\mathcal{O}_{A_1}f\|_p
&\le&\Big((2\pi)^{1/p-1/q}q^{1/q}p^{-1/p}\Big)^{1/2}\|h\|_{q}\\
&\le&\Big((2\pi)^{1/p-1/q}q^{1/q}p^{-1/p}\Big)^{1/2}(a_2b_1-a_1b_2)^{1/2-1/q}\|\mathcal{O}_{A_2}f\|_q,
\end{eqnarray*}
which completes the proof.
\end{proof}

We next present Lieb's uncertainty principle for the short-time OLCT as follows.

\begin{Theorem}\label{Thm:WOLCT:Lieb}
Let $f(t),\;g(t)\in L^2(\mathbb{R})$ and $2\le p<\infty$. Then,
\begin{equation}\label{Thm:WOLCT:Lieb1}
\int\int_{\mathbb{R}^2}|V_{g,A}f(x,u)|^p{\rm{d}}x{\rm{d}}u\le \frac{2}{p}\big(2\pi b\big)^{1-p/2}\|f\|_2^p\|g\|_2^p.
\end{equation}
\end{Theorem}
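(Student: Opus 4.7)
The plan is to reduce Lieb's inequality for the short-time OLCT to the classical Lieb inequality for the ordinary short-time Fourier transform, exploiting the Fourier representation (\ref{Rem:WOLCT:D1}) that the authors have already derived. The point is that all of the quadratic-phase and modulation factors introduced by the OLCT kernel have modulus one, so on the absolute-value level the short-time OLCT reduces, after a rescaling in the frequency variable, to a standard STFT of a phase-modified function whose modulus (and hence $L^2$ norm) agrees with that of $f$.

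\textbf{Key steps.} First I would take moduli in (\ref{Rem:WOLCT:D1}), obtaining
\[
|V_{g,A}f(x, bu+\tau)| = \frac{1}{\sqrt{b}}\,\bigl|\mathcal{F}[\tilde f(\cdot)\,\overline{g(\cdot-x)}](u)\bigr|,
\]
where $\tilde f(t) := f(t)e^{j\frac{a}{2b}t^2}$ satisfies $|\tilde f|=|f|$ and therefore $\|\tilde f\|_2 = \|f\|_2$. Next, in $\iint |V_{g,A}f(x,v)|^p\,dx\,dv$ I would substitute $v = bu+\tau$, which contributes a Jacobian factor of $b$ and converts the left-hand side to
\[
b^{\,1-p/2}\iint \bigl|\mathcal{F}[\tilde f(\cdot)\,\overline{g(\cdot-x)}](u)\bigr|^p\,du\,dx.
\]
A further rescaling $u = 2\pi\omega$ passes from the paper's $\frac{1}{\sqrt{2\pi}}$-normalized angular-frequency Fourier transform to the Plancherel-normalized ordinary-frequency version in which the classical Lieb inequality is stated, producing the factor $(2\pi)^{1-p/2}$. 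At that point the integrand is exactly $|V_g\tilde f(x,\omega)|^p$ for the standard STFT, and I would invoke the classical Lieb uncertainty principle (Gr\"ochenig's book) in the form
\[
\iint |V_g\tilde f(x,\omega)|^p\,dx\,d\omega \le \frac{2}{p}\,\|\tilde f\|_2^p\|g\|_2^p, \qquad p\ge 2.
\]
Combining $b^{1-p/2}(2\pi)^{1-p/2} = (2\pi b)^{1-p/2}$ and $\|\tilde f\|_2 = \|f\|_2$ gives the stated bound.

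\textbf{Main obstacle.} The proof is really a reduction rather than a new analytic estimate; the only delicate point is bookkeeping among three different conventions (the paper's $\frac{1}{\sqrt{2\pi}}$-normalized angular-frequency $\mathcal{F}$, the induced STFT in angular frequency, and the ordinary-frequency STFT used in the classical Lieb inequality) so as to verify that the compounded rescaling constants collapse to precisely $(2\pi b)^{1-p/2}$ with no residual factors. Once this accounting is carried out carefully, the classical Lieb inequality delivers the conclusion directly, and I would not anticipate any genuine analytical difficulty beyond the changes of variable and the observation that the quadratic phase $e^{j\frac{a}{2b}t^2}$ is harmless in $L^2$.
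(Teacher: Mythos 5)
Your proposal is correct, but it takes a genuinely different route from the paper. You reduce the inequality to the classical Lieb uncertainty principle for the standard STFT by taking moduli in (\ref{Rem:WOLCT:D1}) — observing that the chirp $e^{j\frac{a}{2b}t^2}$ and the outer phase factors are unimodular — and then performing the affine change of variables $v=bu+\tau$ (Jacobian $b$, giving $b^{1-p/2}$ after the $b^{-p/2}$ from $|{1}/{\sqrt{jb}}|^p$) together with the $(2\pi)^{1-p/2}$ coming from converting the paper's $\frac{1}{\sqrt{2\pi}}$-normalized angular-frequency transform to the Plancherel-normalized STFT of Gr\"ochenig; since $\|\tilde f\|_2=\|f\|_2$, the classical bound $\frac{2}{p}\|\tilde f\|_2^p\|g\|_2^p$ yields exactly $\frac{2}{p}(2\pi b)^{1-p/2}\|f\|_2^p\|g\|_2^p$. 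The paper instead re-runs Lieb's original argument in the OLCT setting: it first establishes a generalized (sharp, Babenko--Beckner-type) Hausdorff--Young inequality between two OLCT domains (Lemma~\ref{Lem:WOLCT:UP}), applies it with $A_2$ the identity to control $\|\mathcal{O}_A(f\,\overline{g(\cdot-x)})\|_p$ by $\|f\,\overline{g(\cdot-x)}\|_q$, and then integrates in $x$ using the sharp Young convolution inequality. Your reduction is shorter and makes transparent where the factor $(2\pi b)^{1-p/2}$ comes from, at the cost of invoking the full classical Lieb theorem as a black box; the paper's approach is longer but produces Lemma~\ref{Lem:WOLCT:UP} as a result of independent interest and only needs classical Hausdorff--Young and Young as external inputs. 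Both arguments are sound, and your constant bookkeeping checks out.
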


\begin{proof}
Let $q$ satisfy $1/p+1/q=1$. Due to $2\le p<\infty$, we have $1<q\le 2$. Since $f(t),\;g(t)$ belong to $L^2(\mathbb{R}),$ we obtain
$f(\cdot)\overline{g(\cdot-x)}\in L^1(\mathbb{R})$
by Cauchy-Schwartz inequality and $V_{g,A}f(x,u)=\mathcal{O}_{A}(f(t)\overline{g(t-x)})(u)\in L^2(\mathbb{R}^2)$ by (\ref{Lem:WOLCT:A1}). Moreover, according to Fubini's Theorem, for almost all $x\in \mathbb{R}$, we get $\mathcal{O}_{A}(f(t)\overline{g(t-x)})(u)\in L^2(\mathbb{R})$, and thereby $f(\cdot)\overline{g(\cdot-x)}\in L^2(\mathbb{R})$ holds for almost all $x\in \mathbb{R}$ by the generalized Parseval Formula in (\ref{OLCT:pro3}). Therefore, we have $f(\cdot)\overline{g(\cdot-x)}\in L^1(\mathbb{R})\cap L^2(\mathbb{R})$ for almost all $x\in \mathbb{R}$. Furthermore, for almost all $x\in \mathbb{R}$, we obtain $f(\cdot)\overline{g(\cdot-x)}\in L^q(\mathbb{R})$ by using interpolation theorem. Then, by Lemma~\ref{Lem:WOLCT:UP}, we have
\begin{align}
\Big(\int_{\mathbb{R}}|V_{g,A}f(x,u)|^p{\rm{d}}u\Big)^{1/p}=&\|\mathcal{O}_{A}(f(\cdot)\overline{g(\cdot-x)})\|_{p}\nonumber\\
\le&\Big((2\pi)^{1/p-1/q}q^{1/q}p^{-1/p}\Big)^{1/2}(a_2b-ab_2)^{1/2-1/q}\nonumber\\
     &\quad\times\|\mathcal{O}_{A_2}(f(\cdot)\overline{g(\cdot-x)})\|_{q}.\label{Thm:WOLCT:Lieb2}
\end{align}
Specially, let
\[
A_2=\left[
      \begin{array}{cc|c}
        1 & 0 & 0 \\
        0 & 1 & 0 \\
      \end{array}
      \right]
\]
in (\ref{Thm:WOLCT:Lieb2}), and
we have
\begin{align}
\Big(\int_{\mathbb{R}}|V_{g,A}f(x,u)|^p{\rm{d}}u\Big)^{1/p}\le&\Big((2\pi)^{1/p-1/q}q^{1/q}p^{-1/p}\Big)^{1/2}b^{1/2-1/q}\nonumber\\
     &\quad\times\Big(\int_{\mathbb{R}}|f(t)\overline{g(t-x)}|^q{\rm{d}}t\Big)^{1/q}.\label{Thm:WOLCT:Lieb3}
\end{align}
Furthermore,
\begin{align}
\|V_{g,A}f\|_{p}=&\Big(\int_{\mathbb{R}}\Big(\int_{\mathbb{R}}|V_{g,A}f(x,u)|^p{\rm{d}}u\Big){\rm{d}}x\Big)^{1/p}\nonumber\\
\le&\Big((2\pi)^{1/p-1/q}q^{1/q}p^{-1/p}\Big)^{1/2}b^{1/2-1/q}\nonumber\\
    &\quad \times \Big(\int_{\mathbb{R}}\Big(\int_{\mathbb{R}}|f(t)\overline{g(t-x)}|^q{\rm{d}}t\Big)^{p/q}{\rm{d}}x\Big)^{1/p}\nonumber\\
\le&(2\pi)^{1/(2p)-1/(2q)}b^{1/2-1/q}\Big(\frac{2}{p}\Big)^{1/p}\|f\|_2\|g\|_2.\label{Thm:WOLCT:Lieb4}
\end{align}
where we use Young's inequality in the last step.
Therefore, we have
\begin{eqnarray*}
\int\int_{\mathbb{R}^2}|V_{g,A}f(x,u)|^p{\rm{d}}u{\rm{d}}x
&\le&\frac{2}{p}(2\pi)^{1/2-p/(2q)}b^{p/2-p/q}\|f\|_2^p\|g\|_2^p\\
&=&\frac{2}{p}\big(2\pi b\big)^{1-p/2}\|f\|_2^p\|g\|_2^p.
\end{eqnarray*}
The proof is completed.
\end{proof}

Finally, according to the Lieb's uncertainty principle for short-time OLCT in Theorem \ref{Thm:WOLCT:Lieb}, we derive a lower bound for the essential support of the short-time OLCT in the following result.

\begin{Corollary}\label{Thm:WOLCT:h1}
Let $f(t),\;g(t)\in L^2(\mathbb{R})$ and $\|f\|_2=\|g\|_2=1.$ If
\begin{equation}\label{Thm:WOLCT:h2}
\int\int_{\Omega}|V_{g,A}f(x,u)|^2{\rm{d}}x{\rm{d}}u\ge 1-\epsilon
\end{equation}
holds on a measurable set $\Omega\subseteq \mathbb{R}^2$, where $\epsilon\geq0$ and $\Omega$ is called the essential support of $V_{g,A}f(x,u)$,
we have
\[
|\Omega|\ge2\pi b(1-\epsilon)^{\frac{p}{p-2}}\Big(\frac{p}{2}\Big)^{\frac{2}{p-2}} \quad\textrm{ for all } p>2.
\]
\end{Corollary}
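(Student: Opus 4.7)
The plan is to estimate the left side of the hypothesis by Hölder's inequality applied on the set $\Omega$, then invoke Theorem~\ref{Thm:WOLCT:Lieb} to control the resulting $L^p$-norm of $V_{g,A}f$.

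First, I would pick $p>2$ and set its conjugate exponent $r = p/(p-2)$, so that $2/p + 1/r = 1$. Writing the integrand as $|V_{g,A}f|^2 \cdot \mathbf{1}_\Omega$ and applying Hölder with exponents $p/2$ and $r$ on $\Omega$, I get
\begin{equation*}
\iint_{\Omega}|V_{g,A}f(x,u)|^2\,{\rm d}x\,{\rm d}u \le \Bigl(\iint_{\mathbb{R}^2}|V_{g,A}f(x,u)|^{p}\,{\rm d}x\,{\rm d}u\Bigr)^{2/p} |\Omega|^{1-2/p}.
\end{equation*}

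Next, since $\|f\|_2=\|g\|_2=1$, Theorem~\ref{Thm:WOLCT:Lieb} immediately yields
\begin{equation*}
\iint_{\mathbb{R}^2}|V_{g,A}f(x,u)|^{p}\,{\rm d}x\,{\rm d}u \le \tfrac{2}{p}(2\pi b)^{1-p/2},
\end{equation*}
so the Hölder estimate combined with the hypothesis $\iint_\Omega |V_{g,A}f|^2\,{\rm d}x\,{\rm d}u \ge 1-\epsilon$ gives
\begin{equation*}
1-\epsilon \le \Bigl(\tfrac{2}{p}\Bigr)^{2/p}(2\pi b)^{2/p-1}|\Omega|^{(p-2)/p}.
\end{equation*}

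Finally, solving for $|\Omega|$ by raising both sides to the power $p/(p-2)$ and rearranging the factors of $2\pi b$, $2/p$, and $1-\epsilon$ produces
\begin{equation*}
|\Omega| \ge 2\pi b\,(1-\epsilon)^{p/(p-2)}\Bigl(\tfrac{p}{2}\Bigr)^{2/(p-2)},
\end{equation*}
which is the claim. There is no real obstacle here: the only choice to make is pairing the Hölder exponents as $(p/2, p/(p-2))$ so that the $L^p$-bound of Lieb can be plugged in directly, and afterwards the proof is just algebraic bookkeeping of the exponents of $2\pi b$.
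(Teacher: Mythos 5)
Your proposal is correct and follows exactly the paper's own argument: Hölder's inequality on $\Omega$ with the exponent pair $(p/2,\,p/(p-2))$, followed by the Lieb bound of Theorem~\ref{Thm:WOLCT:Lieb} with $\|f\|_2=\|g\|_2=1$, and then the algebraic rearrangement of exponents. Nothing is missing.
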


\begin{proof}
It follows from H\"{o}lder's inequality and (\ref{Thm:WOLCT:Lieb1}) in Theorem~\ref{Thm:WOLCT:Lieb} that
\begin{eqnarray*}
1-\epsilon
&\le&\int\int_{\Omega}|V_{g,A}f(x,u)|^2{\rm{d}}x{\rm{d}}u\\
&\le&\Big(\int\int_{\mathbb{R}^2}|V_{g,A}f(x,u)|^p{\rm{d}}x{\rm{d}}u\Big)^{2/p}|\Omega|^{\frac{p-2}{p}}\\
&\le&\big(2\pi b\big)^{2/p-1}\Big(\frac{2}{p}\Big)^{2/p}\|f\|_2^2\|g\|_2^2|\Omega|^{\frac{p-2}{p}}\\
&=&\big(2\pi b\big)^{2/p-1}\Big(\frac{2}{p}\Big)^{2/p}|\Omega|^{\frac{p-2}{p}}.
\end{eqnarray*}
Hence, for all $p>2$, we have
\[
|\Omega|\ge2\pi b(1-\epsilon)^{\frac{p}{p-2}}\Big(\frac{p}{2}\Big)^{\frac{2}{p-2}},
\]
which completes the proof.
\end{proof}

\section{Conclusion}\label{sec:O4}

In this paper, we first propose Donoho-Stark's uncertainty principle and Amrein-Berthier-Benedicks's uncertainty principle for the OLCT, which are different from Heisenberg's uncertainty principle for the OLCT obtained in \cite{Stern2007}. We then introduce the short-time OLCT and present its Lieb's uncertainty principle. Finally, we give a lower bound for the essential support of the short-time OLCT.

%\section*{Acknowledgements}
%This work was partially supported by the National Natural Science Foundation of China (11525104, 11531013 and 11371200).

%\section*{References}
%\bibliographystyle{elsarticle-num}
%\bibliography{data}

\end{document}